\newtheorem{theorem}{Theorem}
\DeclarePairedDelimiter{\abs}{\lvert}{\rvert}
\begin{document}
\title{Wireless Network Reliability Analysis for Arbitrary Network Topologies}

\author{Semiha~Tedik Basaran$^{\dag}$,~\IEEEmembership{Student Member, IEEE},\\
Gunes~Karabulut Kurt$^{\dag}$,~\IEEEmembership{Senior Member, IEEE},\\
and Frank R. Kschischang$^\ddag$, ~\IEEEmembership{Fellow, IEEE}
\thanks{$^\dag$
S. Tedik Basaran and G. Karabulut Kurt are with the Department of
Communications and Electronics Engineering, Istanbul Technical University,
Turkey, e-mail:  \texttt{\{tedik, gkurt\}@itu.edu.tr}.\newline
$^\ddag$
Frank R. Kschischang is with the Edward S. Rogers Sr.\ Department of
Electrical and Computer Engineering, University of Toronto, Canada, e-mail:
\texttt{frank@ece.utoronto.ca}.}}

\maketitle
	
\begin{abstract}
The outage performance of wireless networks with unstructured network
topologies is investigated.  The network reliability perspective of graph
theory is used to obtain the network outage polynomial of generalized wireless
networks by enumerating paths and cut-sets of its graph representation for both
uncorrelated and correlated wireless channels.  A relation is established
between the max-flow min-cut theorem and key communication performance
indicators. The diversity order is equal to the size of the minimum cut-set between
source and destination, and the coding gain is the number of cut-sets with size
equal to the minimum cut.  An ergodic capacity analysis of networks with
arbitrary topologies based on the network outage polynomial is also presented.
Numerical results are used to illustrate the technical definitions
and verify the derivations.
\end{abstract}
	
\begin{IEEEkeywords}
Diversity gain,
ergodic capacity,
network reliability,
outage probability,
terminal reliability polynomial.
\end{IEEEkeywords}
	
\IEEEpeerreviewmaketitle

\section{Introduction}

\IEEEPARstart{N}{etwork} topologies in wireless environments are generally
dynamic in nature, as the connectivity between nodes is determined according to
their time-varying link signal-to-noise ratio (SNR) value. Channel impairments
such as fading and path loss make it essential to monitor the quality of each
link.

Based on the corresponding SNR value, the outage status can be determined and
used as a performance indicator for each link. In case of an outage (where the
SNR value falls below a certain threshold), two nodes are deemed to be
disconnected; otherwise, they remain connected. Outage probability is a
convenient measure of communication system performance\cite{french1979effect}.
Here we investigate \textit{network outage}, i.e., the outage probability of
communication between a source node and a terminal node over a network of relay
nodes.  Assuming links are in outage independently,  network outage can be
measured by using individual link outage probabilities. The behavior of the
outage probability in the high SNR regime also gives an intuitive understanding
of performance limits of the network \cite{proakis}. In
\cite{Giannakis_diversity_coding_gain}, high SNR error performance of any
communication network (coded or uncoded) is represented by diversity and coding
gains. \textit{Diversity gain} is a measure of the number of
independent copies of the transmitted signal captured by the receiver
\cite{Zheng2003}, while \textit{coding gain}
represents the difference in the outage
probability curve relative to a benchmark performance in the high SNR region
\cite{Giannakis_diversity_coding_gain}.

In this paper, we show that the diversity gain and the coding gain between a
source node and a terminal node can be determined through the \textit{network
outage polynomial}.  This approach dates back at least to Shannon and Moore
\cite{MOORE1956191}, who provide a reliability analysis of relay-aided systems
by considering the unreliability probabilities of relay nodes.  It is proven
that the end-to-end reliability of a given network can be increased through
these unreliable relay nodes.  When a sufficient number of relay nodes is used,
the probability of network unreliability approaches zero \cite{MOORE1956191}.
However, transforming a complex network into equivalent series-parallel
projection may not always be possible. When the series-parallel representation
of a given network is not available, the reliability analysis of generalized
networks becomes more difficult. There are various methods proposed to
calculate network reliability, such as state enumeration, factorizing, path
enumeration, and cut-set enumeration \cite{bondy1976graph,wing1964analysis,
reliability,moskowitz1958analysis}. 

Although network reliability is a well-studied subject, its extension to
wireless networks it is still relatively unexplored
\cite{gupta1999critical,bettstetter2005connectivity,hekmat2006connectivity,
agrawal2009correlated,lu2014reliability}.
As the popularity of wireless communication systems increases when compared to
their wired counterparts in many different areas, the reliability analysis of
wireless communications becomes more important, yet challenging, as wireless
links are more prone to errors and erasures.  Firstly, an unrealistic
deterministic channel model is used when investigating the interference effect
of the wireless channels \cite{gupta1999critical}. The reliability analysis of
wireless multi-hop networks is conducted regarding shadowing effect of the
wireless channel in  \cite{bettstetter2005connectivity,hekmat2006connectivity}.
Both \cite{bettstetter2005connectivity,hekmat2006connectivity} do not consider
the correlation effect of shadowing and this gap is filled by
\cite{agrawal2009correlated}. The reliability analysis of wireless multi-hop
networks, which proposes a mathematical model to represent the network
reliability of correlated shadowing wireless channel, is given in
\cite{lu2014reliability}. 

In \cite{reliability}, path-enumeration and cut-set enumeration methods are
used to calculate network reliability of generalized schemes.  An algorithm
based on a path-enumeration method is presented in \cite{rai1991computer} to
determine the reliability of telecommunication networks from the capacity of
the networks by considering different link capacities. In \cite{4505532}, a
path-based algorithm with a reduced computational time is modeled to obtain
network reliability of wired communications. Instead of considering all
cut-sets of the network, some cut-sets that have as many elements as the size
of the size of the minimum cut are used to obtain an approximate network
reliability expression with reduced complexity \cite{jensen1969algorithm}.
Hence, a lower bound for network reliability is attained by providing a more
practical algorithm. 

The
network outage polynomial gives the probability
that the network has zero instantaneous capacity.
The investigation of network capacity is an 
attractive problem since the maximum capacity of any network is restricted
by the size of the minimum cut of the graph. Hence, the ergodic capacity
of any network can be calculated by using zero-to-$m$ capacity polynomials, 
where the $i$th polynomial gives the probability that the network
has instantaneous capacity $i$.
In the literature, there are some works about the calculation of the capacity
polynomials that determine the value of the maximum flow of arbitrary networks
with random capacity edges by utilizing subset decomposition method 
\cite{doulliez1972transportation, grimmett1982flow}. 
In \cite{doulliez1972transportation}, a subspace decomposition principle 
is used to determine the value of the maximum flow of arbitrary networks 
with random capacity edges. The value of maximum flow analysis of arbitrary
networks with random edge capacities is conducted in \cite{grimmett1982flow},
based upon Bernoulli statistics.

The aforementioned works have focused on obtaining only network reliability
expressions. On the other hand, these works do not introduce any fundamental
performance analysis. In this work, the essential goal is to obtain performance
limits of an arbitrary network topology comprised of links that are prone to
errors and erasures. The main contributions of this work can be listed as
follows:
\begin{itemize}
 \item We establish a framework to calculate the \textit{network outage
polynomial}, as a tool to obtain network outage performance of communication
networks.
 \item We determine the network outage polynomial of some simple
directed
networks, in both correlated channels and uncorrelated channels. Three methods,
namely the path-enumeration method, the cut-set enumeration method, and the
terminal reliability based method are proposed.
 \item  We extract the diversity order and the coding gain of a wireless network
for arbitrary topology based on its graph properties. 
 \item We establish a relationship between the max-flow min-cut theorem of graph
theory and the diversity gain definition and show that the diversity order
corresponds to the size of the minimum cut of the wireless network graph.  We
also prove that the coding gain is equal to the number of cut-sets which have
the size of the minimum cut, and also be easily determined from the network
graph. 
 \item We provide the ergodic capacity analysis of networks in terms of
individual link outage probability. Hence, an upper bound for the achievable
transmission rate is determined. 
\end{itemize}
Using this analysis, optimization of resource utilization can be realized
thanks to the information about the diversity order and  the ergodic capacity
of any topology in wireless networks. For example, efficient multiple access
schemes can be obtained by considering user demands and network limitations
(the diversity order and the ergodic capacity).	

The rest of the paper is organized as follows. Firstly, in Section II, methods
for the calculation of outage polynomials of wireless networks are given. In
Section III, diversity order analysis and ergodic capacity derivations are
presented.  In Section IV, to demonstrate the validity of theoretical results, 
numerical results are presented. Finally, the paper concludes with a
summary of the findings and suggestions for future work in Section V.

\section{Outage Polynomials of Wireless Networks}

Graph representations of communication systems are frequently used to analyze
system performance; hence, key graph theory concepts can often be matched with
the elements of communication systems.  In literature pertaining to wired
networks, the link outage probability is generally ignored since links are
generally highly reliable.  Thus, for wired networks, the connections between
nodes can be represented by deterministic edges.  The links in wireless
channels, on the other hand, are subject to random SNR values, and so the
connections between nodes must be modeled probabilistically.

We model a communications network $\mathcal{N} = (\mathcal{V},\mathcal{E},s,t)$
as a directed acyclic network compromising of a finite vertex set $\mathcal{V}$
of communication nodes, a multi-set of $n$ directed edges
$\mathcal{E} = \{
e_1, e_2, \ldots, e_n \} \subseteq \mathcal{V} \times \mathcal{V}$ representing
communication
links between nodes, a designated source vertex $s$ and a designated terminal
vertex $t$ where $s, t \in V, \; s\neq t$.  An edge $e$
from vertex $v$ to vertex $w$ is denoted as $v \to w$.

A directed path in $\mathcal{N}$ from $s$ to $t$ is a sequence of edges $(v_0
\to v_1),(v_1 \to v_2),\ldots,(v_{\ell-1}\to v_{\ell})$ with $v_0 = s$ and
$v_{\ell}=t$. We suppose that there are $g$ distinct paths $\mathcal{P}_1,
\ldots, \mathcal{P}_g$ in $\mathcal{N}$ from $s$ to $t$.  Nodes $s$ and $t$ are
said to be \emph{connected} if $g \geq 1$.

A subset $\mathcal{C} \subseteq \mathcal{E}$ of edges whose
removal from the network
disconnects $s$ and $t$ is called an \emph{$s$-$t$-separating
cut}, or simply a \emph{cut-set}.  We suppose that
there are $k$ distinct cut-sets $\mathcal{C}_1, \ldots, \mathcal{C}_k$;
the collection of all cut-sets is denoted as $\mathcal{K}$.

A cut-set $\mathcal{C} \in \mathcal{K}$ is called \emph{minimal} if no proper
subset of $\mathcal{C}$ is itself a cut-set.  The collection of all minimal
cut-sets is denoted as $\mathcal{L}$.  A cut-set $\mathcal{C} \in \mathcal{K}$
is called a \emph{minimum cut-set} if it is a cut-set of minimum possible size,
i.e., having the least number of edges among all cut-sets.  The collection of
all minimum cut-sets is denoted as $\mathcal{M}$, and the size of any minimum
cut-set is denoted as $m$.  Although each minimum cut-set is certainly a
minimal cut-set, the converse is not true in general, thus $\mathcal{M}
\subseteq \mathcal{L} \subseteq \mathcal{K}$.

Network outage is a convenient measure of a communication system's performance,
as the overall system performance can be obtained using individual outage
probabilities of the links in the system. To enable communication between a
source node $s$ and a terminal node $t$, there must be at least one path from
$s$ to $t$.  Hence, we can obtain an overall performance result by considering
individual link outages.  The network outage polynomial concept,
which has been
proposed for switching networks
\cite{MOORE1956191,Co87},
is also suitable as performance observation tool
for wireless communication.
Network outage is random due to individual link
outages.  In order to obtain the network outage polynomial for an arbitrary
topology, we use three different methods: path enumeration, cut-set
enumeration, and reliability polynomial calculation.  The required method can
be selected to realize the target aim, as detailed below.

In the following,
we consider the network at a given time instant, and denote
by $p_j$ the probability that link $e_j$ is in outage at that instant.
For example, if the wireless channel gain
$\abs{h_j}$ has a Rayleigh distribution (a frequent assumption
in the wireless communication literature), then the outage
probability of $e_j$ is equal to
\begin{equation*}
p_j=1-\exp \left(-\gamma_j^{-1} \right),
\label{eqn1}
\end{equation*}
where ${\gamma}_j$ represents the average SNR of the link $e_j$ \cite{Zheng2003}.

Link outages induce a random subgraph of $\mathcal{N}$, called
the \emph{residual network},
with edges that are in outage removed.  In the residual network, it
may happen that $s$ and $t$ are not connected.
The \emph{network outage polynomial}, which gives
the probability that no path exists between $s$ and $t$ in
the residual network, is
then formally a polynomial function of $p_1, \ldots, p_n$,
denoted as $O(p_1,\ldots,p_n)$.

Throughout this paper, for any positive integer $\ell$, we will denote the
set $\{ 1, 2, \ldots, \ell \}$ as $[ \ell ]$.

\subsection{Network Outage Polynomial Calculation Based on Path Enumeration}

Firstly, we investigate the path enumeration method to obtain the network
outage polynomial.
We suppose that 
the edges comprising a path
$\mathcal{P}_r$ in $\mathcal{N}$ from $s$ to $t$ are indexed by the set
$\mathscr{P}_r \subseteq [n]$, i.e.,
$\mathcal{P}_r = \{ e_j : j \in \mathscr{P}_r \}$, $r \in [g]$.

Let $Q_r$ denote the event that path $\mathcal{P}_r$
is available, i.e., that none of its links are in outage.
The outage probability of the network is then given by
\[
O(p_1, \ldots, p_n) = 1 - \Pr[ Q_1 \cup Q_2 \cup \cdots \cup Q_g ].
\]
By the principle of inclusion-exclusion \cite{andrews1994}, we have
\begin{align}
\Pr[ Q_1 \cup Q_2 \cup \cdots \cup Q_g ] &=
 \sum_{i_1 \in [g]} \Pr[ Q_{i_1} ]
 -\sum_{\substack{i_1, i_2  \in [g] \\ i_1 \neq i_2}}
    \Pr[Q_{i_1} \cap Q_{i_2}] + \cdots  \nonumber\\
 & + (-1)^{\beta-1} \sum_{\substack{i_1, i_2, \ldots, i_{\beta} \in [g]\\
   i_1, i_2, \ldots, i_{\beta} \text{ distinct}}}
   \Pr[ Q_{i_1} \cap Q_{i_2} \cap \cdots \cap  Q_{i_{\beta}}] + \cdots \nonumber\\
&  +  (-1)^{g-1}  \Pr[ Q_1 \cap Q_2 \cap \cdots \cap Q_g ].
\label{eqn3}
\end{align}
Assuming that individual links are in outage (or not) independently, 
we have
\begin{equation}
\Pr [Q_{i_1} \cap Q_{i_2} \cap \cdots \cap Q_{i_{\beta}}]
= \prod_{j \in \mathscr{P}_{i_1} \cup \mathscr{P}_{i_2} \cup \cdots
\cup \mathscr{P}_{i_\beta}} (1 - p_j).
\label{eqn2}
\end{equation}

\subsection{Network Outage Polynomial Calculation Based on Cut-Set Enumeration}

The network outage polynomial of an arbitrary network can
also be calculated by
enumerating cut-sets of the network, which is dual to the
process of path enumeration.
If the 
edges of any cut-set are all in outage, the network is in outage.

We suppose that
the edges comprising a cut-set
$\mathcal{C}_r$ are indexed by the set
$\mathscr{C}_r \subseteq [n]$, i.e.,
$\mathcal{C}_r = \{ e_j : j \in \mathscr{C}_r \}$, $r \in [k]$.

Let $D_r$ denote the event that cut-set $\mathcal{C}_r$
is active, i.e., that all of its links are in outage.
The outage probability of the network is then given by
\[
O(p_1, \ldots, p_n) = \Pr[ D_1 \cup D_2 \cup \cdots \cup D_k ].
\]
Again by the principle of inclusion-exclusion we have
\begin{align}
\Pr[ D_1 \cup D_2 \cup \cdots \cup D_k ] &=
 \sum_{i_1 \in [k]} \Pr[ D_{i_1} ]
 -\sum_{\substack{i_1, i_2  \in [k] \\ i_1 \neq i_2}}
    \Pr[D_{i_1} \cap D_{i_2}] + \cdots  \nonumber \\
 & + (-1)^{\beta-1} \sum_{\substack{i_1, i_2, \ldots, i_{\beta} \in [k] \\
   i_1, i_2, \ldots, i_{\beta} \text{ distinct}}}
   \Pr[ D_{i_1} \cap D_{i_2} \cap \cdots \cap  D_{i_{\beta}}] + \cdots \nonumber\\
&  +  (-1)^{k-1}  \Pr[ D_1 \cap D_2 \cap \cdots \cap D_k],
\end{align}
where
\begin{equation}
\Pr [D_{i_1} \cap D_{i_2} \cap \cdots \cap D_{i_{\beta}}]
= \prod_{j \in \mathscr{C}_{i_1} \cup \mathscr{C}_{i_1} \cup \cdots
\cup \mathscr{C}_{i_\beta}} p_j.
\end{equation}

\subsection{Network Outage Polynomial Calculation Based on Two-Terminal Polynomial}

Finally, we derive the network outage polynomial expressions of a network based
on the reliability polynomial concept \cite{Co87}, which is a useful function
to reflect the performance of a network.

Consider, for any cut-set $\mathcal{C}_r$, $r \in [k]$, the event $E_r$ that
all the edges of $\mathcal{C}_r$ are in outage while all \emph{other} edges of
the network are \emph{not} in outage.  Since $E_{r}$ is disjoint from $E_{s}$
when $r \neq s$, we have
\[
O(p_1, \ldots, p_n) = \Pr \left[ \bigcup_{r \in [k]} E_r \right]
= \sum_{r \in [k]} \Pr[ E_r ].
\]
Again assuming that
individual links are in outage (or not) independently, 
we have
\[
\Pr[ E_r] = \prod_{j \in \mathscr{C}_r} p_j \cdot \prod_{i \in [k] \setminus \mathscr{C}_r} (1-p_i).
\]

In the special case where $p_j = p$ for all $j \in [n]$, we
have
\[
\Pr [E_r] = p^{|\mathcal{C}_r|} (1 - p)^{n - |\mathcal{C}_r|}.
\]
Writing $O(p)$ for the outage polynomial in this case, we get
\begin{align}
O(p) & = \sum_{r \in [k]} p^{|\mathcal{C}_r|} (1 - p)^{n - |\mathcal{C}_r|}
     = \sum_{i=m}^n A_i p^i (1-p)^{n-i} \nonumber\\
& = (1-p)^n A\left( \frac{p}{1-p} \right),
\label{eqn16}	
\end{align}
where
\begin{equation}
A(x)  = \sum_{\mathcal{C} \in \mathcal{K}} x^{|\mathcal{C}|} = 
A_m x^m + A_{m+1} x^{m+1} + \cdots + A_{n} x^n,
\label{eqn17}
\end{equation}
and where the coefficient $A_i$ of $x^i$ enumerates the number of cut-sets of size $i$. 

It can be deduced from the minimum cut-set definition that $A_m$ is equal to
the number of distinct minimum cut-sets and $A_m \neq 0$.  In addition, $A_n$
is equal to 1.  The outage polynomial can be also expressed in terms of the
reliability polynomial associated with the $\text{Conn}_2(\mathcal{N})$ $s$-$t$
connectedness problem,  $O(p) = 1 - \text{Rel}(\mathcal{N},1-p)$
\cite[Sec.~1.2]{Co87}.

The computational complexity of the outage polynomial depends on
the determination of $\mathcal{K}$. The complexity per cut is given as 
$\mathcal{O}(n)$ in \cite{provan1996p}. Hence, the enumeration of cut-sets 
can be found as $\mathcal{O}(kn)$ where the number of all cut-sets 
($k=|\mathcal{K}|$) depends on the size of $\mathcal{N}$ \cite{provan1996p,ball1995network}.
\subsection{Bounds on the Outage Polynomial}

We may write some simple bounds on the outage polynomial as follows.

Firstly, if we use the inequality of $(1-p) \leq 1$ in (\ref{eqn16}),
we get
\begin{equation}
O(p) \leq \sum_{i=m}^n A_i p^i = A(p)
\label{eqn18}
\end{equation}
To derive another upper bound expression, we can use the fact that every cut-set
must contain a minimal cut-set.  Since the probability
that edges of a cut-set $\mathcal{C}$ are in outage is
$p^{|C|}$, we get that
\begin{equation}
O(p) \leq \sum_{C \in \mathcal{L}} p^{|C|}.
\label{eqn19}
\end{equation}
We also have the lower bound 
\begin{equation}
O(p) \geq A_m p^m(1-p)^{n-m}
\label{eqn20}
\end{equation}
which is obtained by retaining just the first term in the expansion
$O(p) = \sum_{i=m}^n A_i p^i (1-p)^{n-i}$.

\subsection{Presence of Correlated Channels}

In the previous subsections, we have assume that the state of each link is
independent of the others. This assumption may be unrealistic in many
situations (e.g., multi-antenna systems) because of spatial correlation. The
correlated channel case needs to be considered to determine the limitations of
the wireless networks. 

We adopt a simple correlation model, as follows.
Firstly, the set $\mathcal{E}$ of links is partitioned into
disjoint nonempty subsets,
$\mathcal{B}_1$, $\mathcal{B}_2$, \ldots, $\mathcal{B}_f$,
so that
\[
\bigcup_{i=1}^f \mathcal{B}_i = \mathcal{E} \quad \text{and} \quad
i \neq j \text{ implies } \mathcal{B}_i \cap \mathcal{B}_j = \emptyset.
\]
To subset $\mathcal{B}_i$ is associated a Bernoulli ($\{0,1\}$-valued)
random variable
$S_i$, with $\Pr[S_i=1]=\rho$.  If $S_i = 1$, the link states
(in outage or not) for all links in $\mathcal{B}_i$
are chosen to be equal,
while if $S_i = 0$, the link states for the
links in $\mathcal{B}_i$ are chosen independently at random.
Suppose that $\mathcal{B}_i$ has size $|\mathcal{B}_i|=x$,
and let $\mathcal{S}_i$ be any subset of $\mathcal{B}_i$ of
size $|\mathcal{S}_i|=y$, where $0 \leq y \leq x$.
Then the probability $p_o(x,y)$ that the links of $\mathcal{S}_i$ are
in outage while the links of $\mathcal{B}_i \setminus \mathcal{S}_i$
are \emph{not} in outage is given as
\begin{equation}
p_o(x,y) = 
\begin{cases}
\rho(1-p) + (1-\rho)(1-p)^x & \text{if } y=0 \\
\rho p + (1-\rho)p^x & \text{if }y = x\\
(1-\rho) p^y (1-p)^{x-y} & \text{otherwise}.
\end{cases}
\label{eqn:correlation}
\end{equation}
We assume that the random variables $S_1, \ldots, S_f$ are mutually
independent.
Note that the previously considered case (of independent link-states)
is obtained by considering $\rho=0$, or, equivalently,
by partitioning $\mathcal{E}$
into singleton sets where $|\mathcal{B}_i| = 1$ for all $i$.

Now, given any subset $\mathcal{C} \subset \mathcal{E}$ of edges
(e.g., a cut-set), the probability that all edges of $\mathcal{C}$
are in outage while all edges in $\mathcal{E} \setminus \mathcal{C}$
are \emph{not} in outage is given by
\[
\prod_{i=1}^f p_o(|\mathcal{B}_i|,|\mathcal{C} \cap \mathcal{B}_i|).
\]
Thus the network outage polynomial is obtained as
\begin{equation}
O(p) =  \sum_{\mathcal{C} \in \mathcal{K}}
\prod_{i=1}^f p_o(|\mathcal{B}_i|,|\mathcal{C} \cap \mathcal{B}_i|).
\label{eqn:correlatedoutage}
\end{equation}

\section{Diversity Order and Ergodic Capacity Analyses for Arbitrary Network Topologies}

In this section, performance limitations of an arbitrary network are determined
via the outage polynomial. Firstly, expressions for diversity gain and coding
gain are derived. Secondly, the ergodic capacity is considered.

\subsection{Diversity Order Analysis}

In order to provide further insight into the obtained outage probability
expression, an asymptotic expression of outage probability is derived.
The network is in outage if there is no defined path between a source and
terminal nodes. Coding and diversity gains can represent the network outage
probability in the limit as $p \rightarrow 0$, referred to as the
high SNR regime.
The high
SNR  performance of any system determines the performance limits of a wireless
network. In the high SNR regime, the outage probability expression of an arbitrary given
network is given as
\[
O(p) \approx \alpha \gamma^{-d},
\]
where $d$, the \emph{diversity gain}, measures the number of
independent copies of the transmitted signal that are received
at the terminal node, and
where  $\alpha$, the \emph{coding gain} (usually expressed
on a decibel scale), is a
measure of the performance difference between the given system and
a baseline system having
$O(p) \approx \gamma^{-d}$ \cite{wang2003simple}.

For the purposes of the following theorem,
we say that two functions $f(p)$ and $g(p)$
are asymptotically equal, written $f(p) \sim g(p)$,
if
\[
\lim_{p \to 0} \frac{f(p)}{g(p)} = 1.
\]

\begin{theorem}
In a network with outage polynomial
$O(p) = \sum_{i=m}^n A_i p^i (1-p)^{n-i}$,
\[
O(p) \sim A_m p^m.
\]
Thus the diversity order of such a network is equal
to the size of a minimum cut-set, i.e., $d=m$, and
the coding gain is equal to the number of distinct
minimum cut-sets, i.e., $\alpha=A_m$.
\end{theorem}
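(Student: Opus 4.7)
The plan is to work directly with the closed-form cut-set expansion $O(p) = \sum_{i=m}^n A_i p^i (1-p)^{n-i}$ given just before the theorem, and verify the asymptotic equivalence by computing $\lim_{p \to 0} O(p) / (A_m p^m)$ and showing that this limit equals $1$.

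The computation itself is short. First I would factor $p^m$ out of the sum, writing
\begin{equation*}
O(p) = p^m \left[ A_m (1-p)^{n-m} + \sum_{i=m+1}^{n} A_i \, p^{\,i-m}(1-p)^{n-i} \right].
\end{equation*}
As $p \to 0$, the leading term inside the brackets satisfies $A_m (1-p)^{n-m} \to A_m$, while every term in the remaining sum carries a factor $p^{i-m}$ with $i - m \geq 1$ and therefore vanishes. Dividing by $A_m p^m$ — permissible because the paper has already noted that $A_m \geq 1$, since at least one minimum cut-set exists — and passing to the limit yields $1$, establishing $O(p) \sim A_m p^m$.

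To obtain the diversity order and coding gain, I would translate this asymptotic in $p$ into an asymptotic in the average SNR $\gamma$ via the Rayleigh relation $p = 1 - \exp(-\gamma^{-1})$ recorded earlier in the paper. A one-term Taylor expansion gives $p = \gamma^{-1} + O(\gamma^{-2})$, hence $p^m \sim \gamma^{-m}$ as $\gamma \to \infty$. Combining this with the polynomial asymptotic yields $O \sim A_m \gamma^{-m}$, and matching with the defining form $O(p) \approx \alpha \gamma^{-d}$ reads off $d = m$ and $\alpha = A_m$, as claimed.

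I do not anticipate a genuine obstacle: the argument is essentially the elementary observation that a polynomial near zero is dominated by its lowest-degree nonvanishing monomial. The only delicate point is ensuring that the leading coefficient is nonzero, which is precisely the content of the already-established fact that $A_m$ counts a nonempty set of minimum cut-sets. Everything else — the SNR translation and the identification of $d$ and $\alpha$ — is a direct application of the definitions recalled immediately before the theorem statement.
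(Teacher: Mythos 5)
Your proposal is correct and follows essentially the same route as the paper: both compute $\lim_{p\to 0} O(p)/(A_m p^m)$ term by term from the cut-set expansion, observe that every term beyond the first carries a positive power of $p$ and vanishes, and use $A_m \geq 1$ to justify the division. Your added step making the translation $p \sim \gamma^{-1}$ explicit is a small but welcome elaboration of what the paper leaves implicit when it reads off $d=m$ and $\alpha=A_m$.
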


\begin{proof}
We have
\begin{align}
\lim_{p \rightarrow 0} \frac{O(p)}{A_m p^m} 
&=\lim_{p \rightarrow 0} \frac{A_m p^m (1-p)^{n-m} 
+ A_{m+1} p^{m+1} (1-p)^{n-m-1} + \cdots + p^n}{A_m p^m}\nonumber \\
&= \lim_{p \rightarrow 0} (1-p)^{n-m} + \lim_{p \rightarrow 0} 
\frac{A_{m+1}}{A_m} p (1-p)^{n-m-1} + \cdots + \lim_{p\rightarrow 0}
\frac{1}{A_m}p^{n-m}\nonumber \\
&= 1.
\label{eqn27}
\end{align}
\end{proof}
The value of maximum flow (the size of the minimum cut) can be calculated by
enumerating the number of cut-sets in a dual manner for unit capacity graphs.
For dense network graphs, the Ford-Fulkerson algorithm can be used to determine
the size of the minimum cut value \cite{ford1956maximal}.

It is obvious that adding new edges to a network cannot reduce the size of any
cut-sets.  If newly added edges (e.g., a line-of-sight edge) provide a new
edge-disjoint path from $s$ to $t$, then the cardinality of all cut-sets,
and hence the diversity order of the network,
increases by one.

\subsection{Ergodic Network Capacity}

Suppose now that each network link (when not in outage) provides
unit transmission capacity.  It is well
known, e.g., \cite{koetter03},  that the instantaneous
$s$-$t$ unicast capacity $C$ is equal to
the size of the minimum $s$-$t$-separating cut in the network
subgraph induced by the links that are \emph{not} in outage; this
transmission rate can be achieved by routing information along edge-disjoint
paths between $s$ and $t$ (which, by Menger's Theorem, exist
in sufficient number).
As the link-state is random, the instantaneous capacity $C$
is a random variable.  Indeed, the outage polynomial $O(p)$
gives the probability that $C=0$.  It is also clear that
$C$ is bounded by $m$, the minimum cut-set size.
As $C$ takes integer values in a bounded set, it has
a well-defined expected value, called the \emph{ergodic network
capacity}.

For $i \in \{ 0, 1, \ldots, m \}$, the event
$C=i$ arises when \emph{all} minimal cut-sets $\mathcal{C} \in \mathcal{L}$
contain at least $i$ links not in outage, and \emph{at least one}
of these cut-sets contains exactly $i$ links not in outage.
In other words, $C=i$ arises when the minimum number of non-outage
links among minimal cut-sets is equal to $i$.
More precisely,
let $\mathcal{E}'$ denote the set of edges \emph{not} in outage
at a given time instant.  For any minimal cut $\mathcal{C} \in \mathcal{L}$,
let
\begin{equation}
\delta_i(\mathcal{C})= \begin{cases}
0 & {\lvert \mathcal{C} \cap \mathcal{E}' \rvert < i} \\
1 & {{\lvert \mathcal{C} \cap \mathcal{E}' \rvert \geq i}}
\end{cases}
\label{eqn13}
\end{equation}
be the function that indicates whether $\mathcal{C}$ contains
at least $i$ edges not in outage.
The event $C=i$ then arises if
\begin{align}
\forall \mathcal{C} \in \mathcal{L} (\delta(\mathcal{C}) = 1) \label{eqn14}\\
\min_{\mathcal{C} \in \mathcal{L}} \lvert \mathcal{C} \cap \mathcal{E}' \rvert = i.
\label{eqn15}
\end{align}

For every $i$, the probability that $C=i$ is given
by some polynomial $C_i(p)$.
The ergodic capacity can then be obtained, in terms of $p$,
as 
\begin{equation}
E[C](p) = \sum_{i=0}^m i C_i(p){.}
\label{ergodic}
\end{equation}
When the minimal cut sets $\mathcal{C} \in \mathcal{L}$
are disjoint,
the $i$th capacity polynomial can be calculated as follows.
For any minimal cut $\mathcal{C} \in \mathcal{L}$ of size
$|\mathcal{C}|$, let
$q(i,|\mathcal{C}|,p)$ denote the probability that $\mathcal{C}$
contains \emph{at least} $i$ links not in outage;  thus
\[
q(i,|\mathcal{C}|,p) = \sum_{j \geq i} \binom{ |\mathcal{C}|}{j}
p^{|\mathcal{C}|-j}(1-p)^j.
\]
The probability that \emph{every} minimal cut contains $i$ or more
links not in outage is then given as
$\prod_{\mathcal{C} \in \mathcal{L}} q(i,|\mathcal{C}|,p)$.
The probability that $C=i$ is then given as
the probability that \emph{every} minimal cut contains $i$ or
more links in non-outage but not every minimal cut contains $i+1$
more links in non-outage, namely
\[
C_i(p) =
\prod_{\mathcal{C} \in \mathcal{L}} q(i,|\mathcal{C}|,p)-
\prod_{\mathcal{C} \in \mathcal{L}} q(i+1,|\mathcal{C}|,p).
\]

\begin{figure}	
	\centering
	\begin{subfigure}{0.4\textwidth}
		\centering
		\begin{tikzpicture}[>=stealth,vertex/.style=
		{circle,draw,minimum size=3.5ex},x=3cm,y=3cm]
		\node[vertex] (A) at (0,0) {$s$};
		\node[vertex] (B) at (0.8,0) {};
		\node[vertex] (C) at (1.6,0) {$t$};
		\draw[->] (A) -- node[above] {$e_1$} (B);
		\draw[->] (B) to[out=30,in=150] node[above] {$e_2$} (C);
		\draw[->] (B) to[out=-30,in=210] node[below] {$e_3$} (C);
		\end{tikzpicture}    
	\caption{$\mathcal{N}_1$}
		\label{fig:N1}
	\end{subfigure}
	\begin{subfigure}{0.4\textwidth}
		\centering 
		\begin{tikzpicture}[>=stealth,vertex/.style=
		{circle,draw,minimum size=3.5ex},x=3cm,y=3cm]
		\node[vertex] (A) at (0,0) {$s$};
		\node[vertex] (B) at (0.8,0) {};
		\node[vertex] (C) at (1.6,0) {$t$};
		\draw[->] (A) to[out=30,in=150] node[above] {$e_1$} (B);
		\draw[->] (A) to[out=-30,in=210] node[below] {$e_2$} (B);
		\draw[->] (B) to[out=30,in=150] node[above] {$e_3$} (C);
		\draw[->] (B) to[out=-30,in=210] node[below] {$e_4$} (C);
		\end{tikzpicture}
	\caption{$\mathcal{N}_2$}
		\label{fig:N2}
	\end{subfigure}
	\begin{subfigure}{0.4\textwidth}
		\centering\vspace{15pt}
		\begin{tikzpicture}[>=stealth,vertex/.style=
		{circle,draw,minimum size=3.5ex},x=3cm,y=3cm]
		\node[vertex] (A) at (0,0) {$s$};
		\node[vertex] (B) at (0.8,0) {};
		\node[vertex] (C) at (1.6,0) {$t$};
		\draw[->] (A) -- node[above] {$e_1$} (B);
		\draw[->] (B) -- node[above] {$e_2$} (C);
		\end{tikzpicture}
		\vspace{35pt}
	\caption{$\mathcal{N}_3$}
		\label{fig:N3}
	\end{subfigure}
	\begin{subfigure}{0.4\textwidth}
	\centering \vspace{20pt}
	\begin{tikzpicture}[>=stealth,vertex/.style=
	{circle,draw,minimum size=3.5ex},x=3cm,y=3cm]
	\node[vertex] (A) at (0,0) {$s$};
	\node[vertex] (B) at (0.4,-0.5) {};
	\node[vertex] (C) at (0.8,0) {};
	\node[vertex] (D) at (1.2,-0.5) {};
	\node[vertex] (E) at (1.6,0) {$t$};
	\draw[->] (A) -- node[above] {$e_2$} (B);
	\draw[->] (A) -- node[above] {$e_1$} (C);
	\draw[->] (B) -- node[above] {$e_3$} (D);
	\draw[->] (C) -- node[above] {$e_5$} (E);
	\draw[->] (C) -- node[above] {$e_4$} (D);
	\draw[->] (D) -- node[above] {$e_6$} (E);
	\end{tikzpicture}
	\caption{$\mathcal{N}_4$}
	\label{fig:N4}
\end{subfigure}
	\begin{subfigure}{0.4\textwidth}
		\centering \vspace{20pt}
		\begin{tikzpicture}[>=stealth,vertex/.style=
		{circle,draw,minimum size=3.5ex},x=3cm,y=3cm]
		\node[vertex] (A) at (0,0) {$s$};
		\node[vertex] (B) at (0.8,0.5) {};
		\node[vertex] (C) at (0.8,-0.5) {};
		\node[vertex] (D) at (1.6,0) {$t$};
		\draw[->] (A) -- node[above] {$e_1$} (B);
		\draw[->] (A) -- node[below] {$e_2$} (C);
		\draw[->] (B) -- node[above] {$e_3$} (D);
		\draw[->] (C) -- node[below] {$e_4$} (D);
		\end{tikzpicture}
		\caption{$\mathcal{N}_5$}
		\label{fig:N5}
	\end{subfigure}
	\begin{subfigure}{0.4\textwidth}
		\centering \vspace{0.5cm}
		\begin{tikzpicture}[>=stealth,vertex/.style=
		{circle,draw,minimum size=3.5ex},x=3cm,y=3cm]
		\node[vertex] (A) at (0,0) {$s$};
		\node[vertex] (B) at (0.8,0.5) {};
		\node[vertex] (C) at (0.8,-0.5) {};
		\node[vertex] (D) at (1.6,0) {$t$};
		\draw[->] (A) -- node[above] {$e_1$} (B);
		\draw[->] (A) -- node[above] {$e_2$} (D);
		\draw[->] (A) -- node[below] {$e_3$} (C);
		\draw[->] (B) -- node[above] {$e_4$} (D);
		\draw[->] (C) -- node[below] {$e_5$} (D);
		\end{tikzpicture}
		\caption{$\mathcal{N}_6$}
		\label{fig:N6}
	\end{subfigure}
	\caption{There are three example networks denoted by $\mathcal{N}_1$, 
		      $\mathcal{N}_2$, $\mathcal{N}_3$, $\mathcal{N}_4$, $\mathcal{N}_5$, 
		      and $\mathcal{N}_6$ which are presented in (a), (b), (c), (d), and (e), 
		      respectively. $\mathcal{N}_1$, $\mathcal{N}_2$, and $\mathcal{N}_3$ 
		      respectively have 3 edges, 4 edges, 2 edges, 6 edges, 4 edges and 5 edges.}
	\label{fig:all_graphs}	
\end{figure}
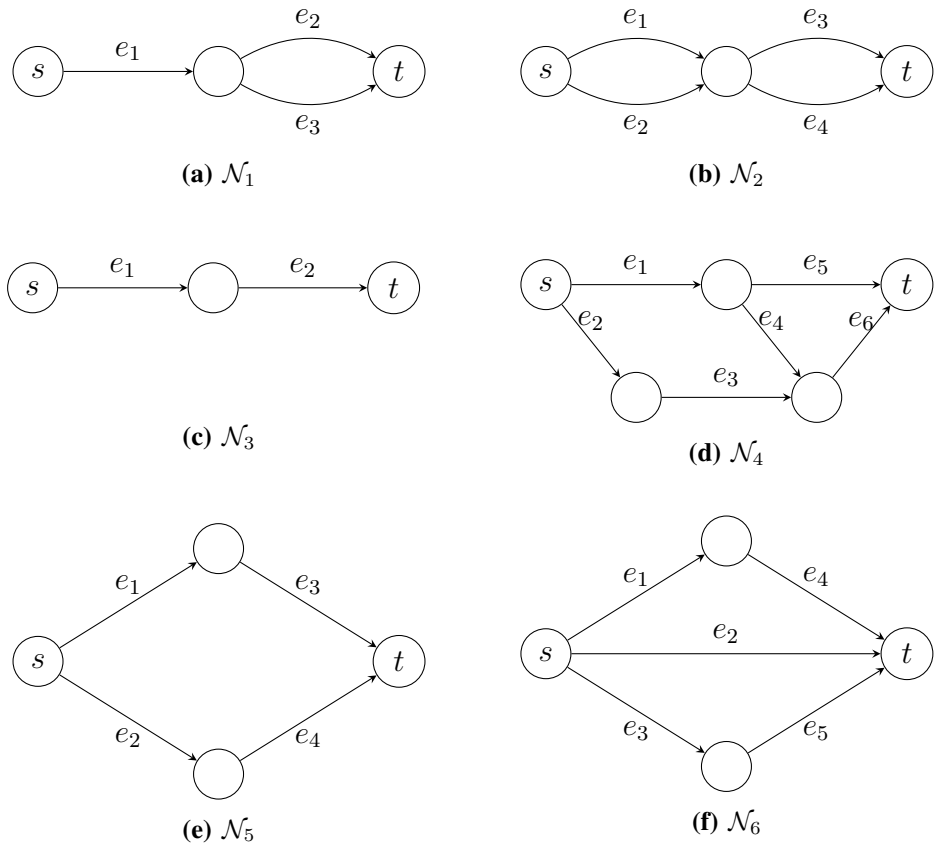
The computational complexity of the ergodic network capacity is made up of 
the enumeration of all minimal cut-sets, \eqref{eqn13}, \eqref{eqn14}, and \eqref{eqn15}. 
The enumeration of all minimal cut-sets has a complexity of 
$\mathcal{O}(|\mathcal{L}||\mathcal{V}|^3)$ as given in \cite{berry2000generating}. 
The total complexity of the functions defined in \eqref{eqn13}, \eqref{eqn14}, and 
\eqref{eqn15} is equal to $\mathcal{O}(m(|\mathcal{L}|+\zeta))$, 
where $\zeta=\sum\limits_{\mathcal{C}\in \mathcal{L}}^{} |\mathcal{C}|$. 
Hence the total computational complexity of the ergodic network capacity expression 
is equal to $\mathcal{O}\left(|\mathcal{L}||\mathcal{V}|^3+m(|\mathcal{L}|+\zeta)\right)$. 
Note that $k$ and $|\mathcal{L}|$ increase exponentially with the size of $\mathcal{N}$ 
\cite{provan1996p}.
\section{Numerical Results}	

In this section, we present numerical results to clarify the theoretical
expressions on the network performance. We provide two instructive examples to
clarify theoretical expressions derived in the previous sections.

Firstly, consider the example network $\mathcal{N}_1$ presented in Fig.
\ref{fig:all_graphs}(\subref{fig:N1}) with edges as labeled.
In this network, there are
$n=3$ edges with the size of the minimum cut $m=1$.
The cut-sets, minimal cut-sets, and minimum cut-sets are
\begin{align*}
\mathcal{K}  =& \{ \{ e_1 \}, \{ e_1,e_2 \}, \{ e_1,e_3 \}, 
                 \{ e_2,e_3 \}, \{ e_1,e_2,e_3 \} \},\\
\mathcal{L}  =& \{ \{ e_1 \}, \{ e_2,e_3 \}  \}, \text{ and } \\
\mathcal{M}  =& \{ \{ e_1 \}  \} , \text{ respectively}.
\end{align*}
We have $A(x) = x + 3x^2 + x^3$, thus, the outage polynomial for
 $\mathcal{N}_1$ is calculated as
\begin{equation*}
O(p) = p(1-p)^2 + 3p^2(1-p) + p^3 = p + p^2 - p^3.
\label{eqn30}
\end{equation*} 
The bound expressions are also given by 
\begin{align*}
O(p) \leq  & A(p) = p + 3p^2 + p^3 \\
O(p) \leq & p + p^2 \\
O(p) \geq & p(1-p)^2.
\end{align*}
\begin{figure}[t!]
	\centering
	\includegraphics[width=0.75\linewidth]{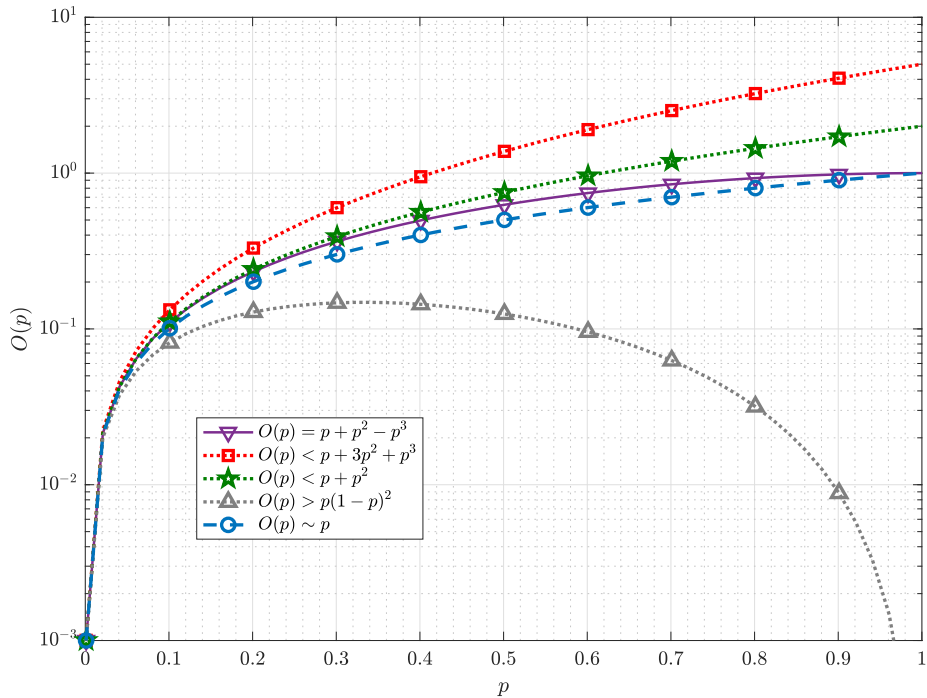}
	\caption{The comparative results of upper and lower bounds of 
		the outage polynomial of the $\mathcal{N}_1$.}
	\label{fig:outage_bounds_3edges}
\end{figure}
	
Fig.~\ref{fig:outage_bounds_3edges} shows that
the given upper
bounds become tight when $p>0.5$. As $p \rightarrow 0$, all bounds have the
same outage performance with the exact $O(p)$ expression. In addition, the
first order approximation of $O(p)$ given as $O(p)\sim p$ has a close performance
with $O(p)$ along with $p$.

Based on the given capacity assurance sets, capacity polynomials 
of the network can be calculated as:
\begin{align*} 
C_0(p) &=O(p)= p + p^2 - p^3   \\
C_1(p) &=2p(1-p)^2+(1-p)^3=1-p-p^2+p^3. 
\label{eqn33}
\end{align*}
By using \eqref{ergodic}, the ergodic capacity of $\mathcal{N}_1$ 
can be found as:
\begin{equation*}
E[C](p) =1-p-p^2+p^3. 
\label{eqn34}
\end{equation*}	
The obtained capacity polynomials of $\mathcal{N}_1$ are presented in
Fig.~\ref{fig:capacity_3edges} (\subref{fig:capacity_3edges_a}). While $p<0.5$,
$C_0(p)$ is highly probable when compared to $C_m(p)$ for $m=1$. On
the other hand, $C_1(p) \rightarrow 1$ in the case of $p \rightarrow 0$.
It can be deduced from Fig. \ref{fig:capacity_3edges}
(\subref{fig:capacity_3edges_b}),  the average capacity of the network
increases while $p$ is decreasing. In addition, the maximum value of the
average capacity of the network is equal to $m=1$ for $p=0$.

We give another example to illustrate the correlated case results. The depicted
extended graph of Fig. \ref{fig:all_graphs} (\subref{fig:N1}) with 4 edges
labeled as $\mathcal{N}_2$ is given in Fig. \ref{fig:all_graphs}
(\subref{fig:N2}). 

\begin{figure}[tb]
	\centering
	\begin{subfigure}[b]{0.45\textwidth}
		\includegraphics[width=\textwidth]{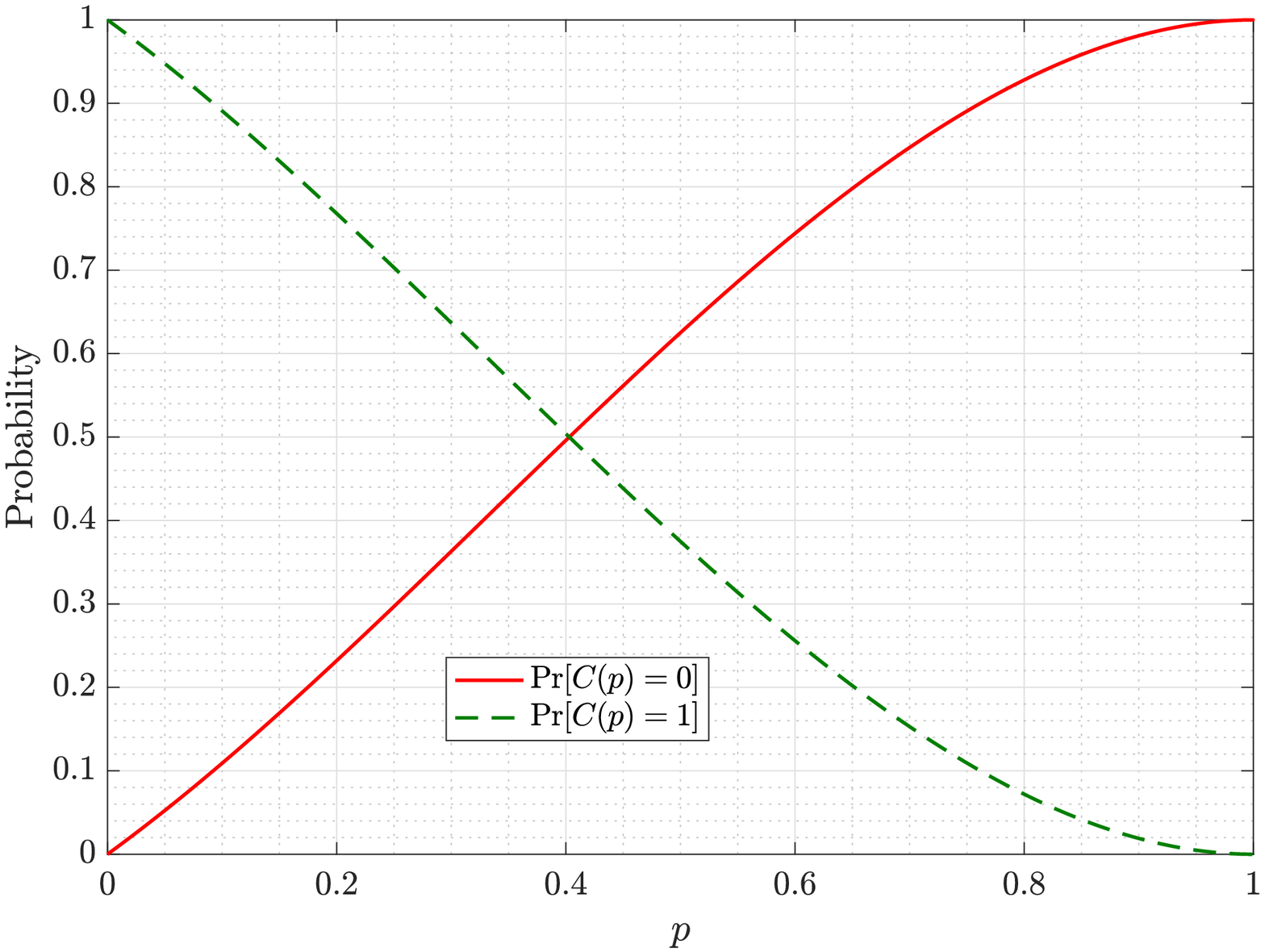}
		\caption{}
		\label{fig:capacity_3edges_a}
	\end{subfigure}
	\begin{subfigure}[b]{0.45\textwidth}
		\includegraphics[width=\textwidth]{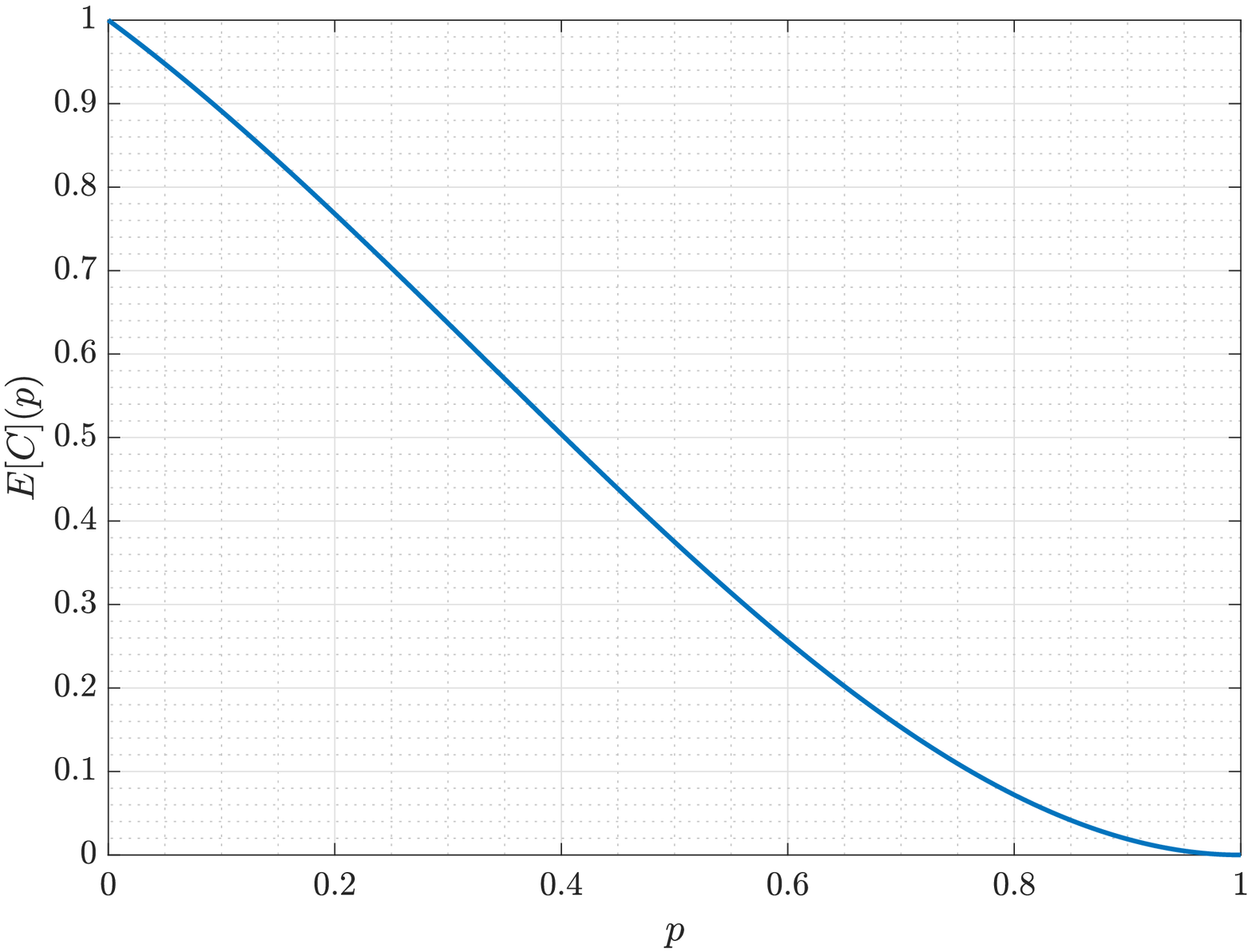}
		\caption{}
		\label{fig:capacity_3edges_b}
	\end{subfigure}	
	\caption{(a) The capacity polynomials results of $\mathcal{N}_1$ versus varying  
		$p$ value are presented. (b) The ergodic capacity results of $\mathcal{N}_1$.}
	\label{fig:capacity_3edges}
\end{figure}
The network has the following sets:
\begin{align*}
\mathcal{K}&= \{\{e_1,e_2\}, \{e_3,e_4\}, \{e_1,e_2,e_3\}, 
\{e_1,e_2,e_4\}, \{e_2,e_3,e_4\}, \{e_1,e_3,e_4\}, \{e_1,e_2,e_3,e_4\}\}  
\\ \mathcal{L} &= \{\{e_1,e_2\}, \{e_3,e_4 \}\} 
\\ \mathcal{M}&=\{\{ e_1,e_2\}, \{e_3,e_4 \}\} 
\label{eqn35}
\end{align*}
In the uncorrelated case, the outage polynomial can be calculated as:
\begin{equation*}
O(p) = 2p^2(1-p)^2 +4p^3(1-p)+p^4=2p^2-p^4
\label{eqn36}
\end{equation*}
where $m=2$ and $A_m=2$. If the correlated edge assumption given in \eqref{eqn:correlation} is used, the disjoint edge sets are given as
\begin{equation*}
\mathcal{B}_1=\{e_1,\; e_2\}, \; \mathcal{B}_2=\{e_3,\;e_4\}.
\end{equation*}  
where $\mathcal{B}_1 \cup \mathcal{B}_2=\mathcal{E} \;\textrm{and}\; \mathcal{B}_1 \cap \mathcal{B}_2=\emptyset$. By using \eqref{eqn:correlatedoutage}, the outage polynomial of the correlated case is derived as:
\begin{equation}
O(p)=(\rho p+p^2-\rho p^2)\left[2-\rho p-p^2 + \rho p^2 \right]
\label{eqn37}
\end{equation}
	
\begin{figure}[t]
	\centering
	\includegraphics[width=0.75\linewidth]{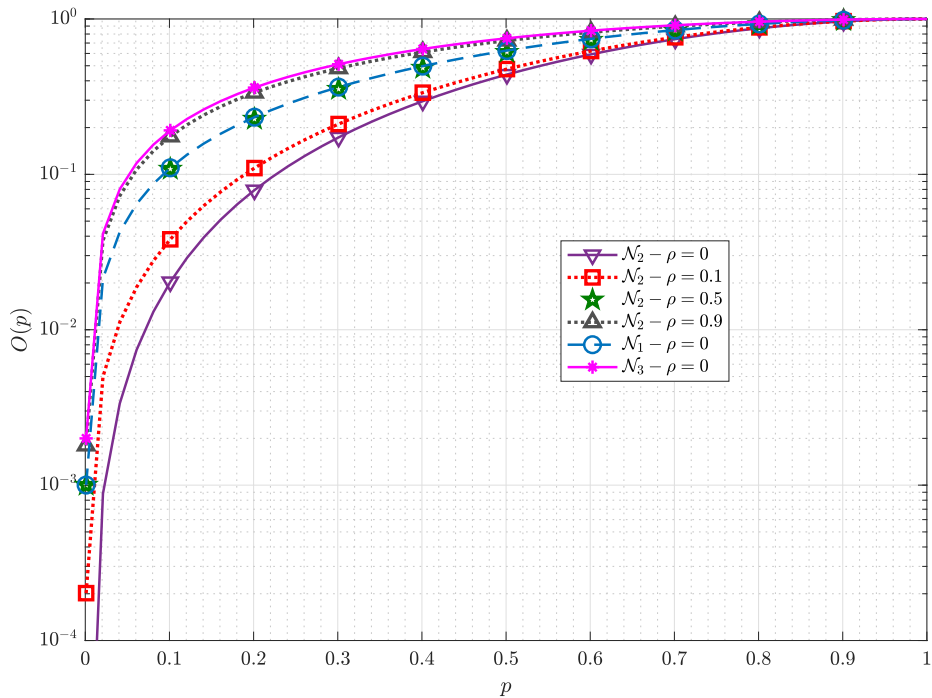}
	\caption{The outage polynomial results of $\mathcal{N}_2$ are presented in various correlation coefficient, $\rho$.}
	\label{fig:outage_4edges}
\end{figure}
The numerical results of \eqref{eqn37} are presented in Fig. \ref{fig:outage_4edges}. The uncorrelated case ($\rho=0$) has the best outage performance as expected. When $\rho=0.1$, the outage performance is worse than the uncorrelated case.  As $\rho$ increases, the outage performance gets worse. When $\rho=0.5$, 4-edges network has the close performance of 3-edges system handled in Example 1. On the other hand, if $\rho$ is equal to $0.9$ it means that highly correlated links are available, the outage performance of 4-edges networks approaches to 2-edges system model labeled as $\mathcal{N}_3$ is given in Fig. \ref{fig:all_graphs} (\subref{fig:N3}).
	
The capacity polynomials of $\mathcal{N}_2$ is given in Fig. \ref{fig:outage_4edges} can be calculated as:
\begin{align*}
C_0(p) &= p^4+4p^3(1-p) + 2p^2(1-p)^2 =2p^2-p^4    \\
C_1(p) &=4p^2(1-p)^2+4p(1-p)^3=4p-8p^2+4p^3   \\
C_2(p) &= (1-p)^4 . 
\label{eqn38}
\end{align*}
Hence, the ergodic capacity of  $\mathcal{N}_2$ is given by
\begin{equation*}
E[C](p)=2-4p+4p^2-4p^3+2p^4.
\label{eqn39}
\end{equation*}
The numerical results of the given polynomials are shown in Fig. \ref{fig:capacity_4edges} (\subref{fig:capacity_4edges_a}).
While $p<0.5$, ${C_m(p)}<m$ is high than ${C_m}(p)$. On the other hand, ${C_m}(p) \rightarrow 1$ in the case of $p \rightarrow 0$. Hence, the maximum value of the ergodic capacity of the network which is shown in Fig.  \ref{fig:capacity_4edges} (\subref{fig:capacity_4edges_b}) is equal to $m=2$ for $p \rightarrow 0$.
	\begin{figure}[t]
	\centering
	\begin{subfigure}[b]{0.45\textwidth}
		\includegraphics[width=\textwidth]{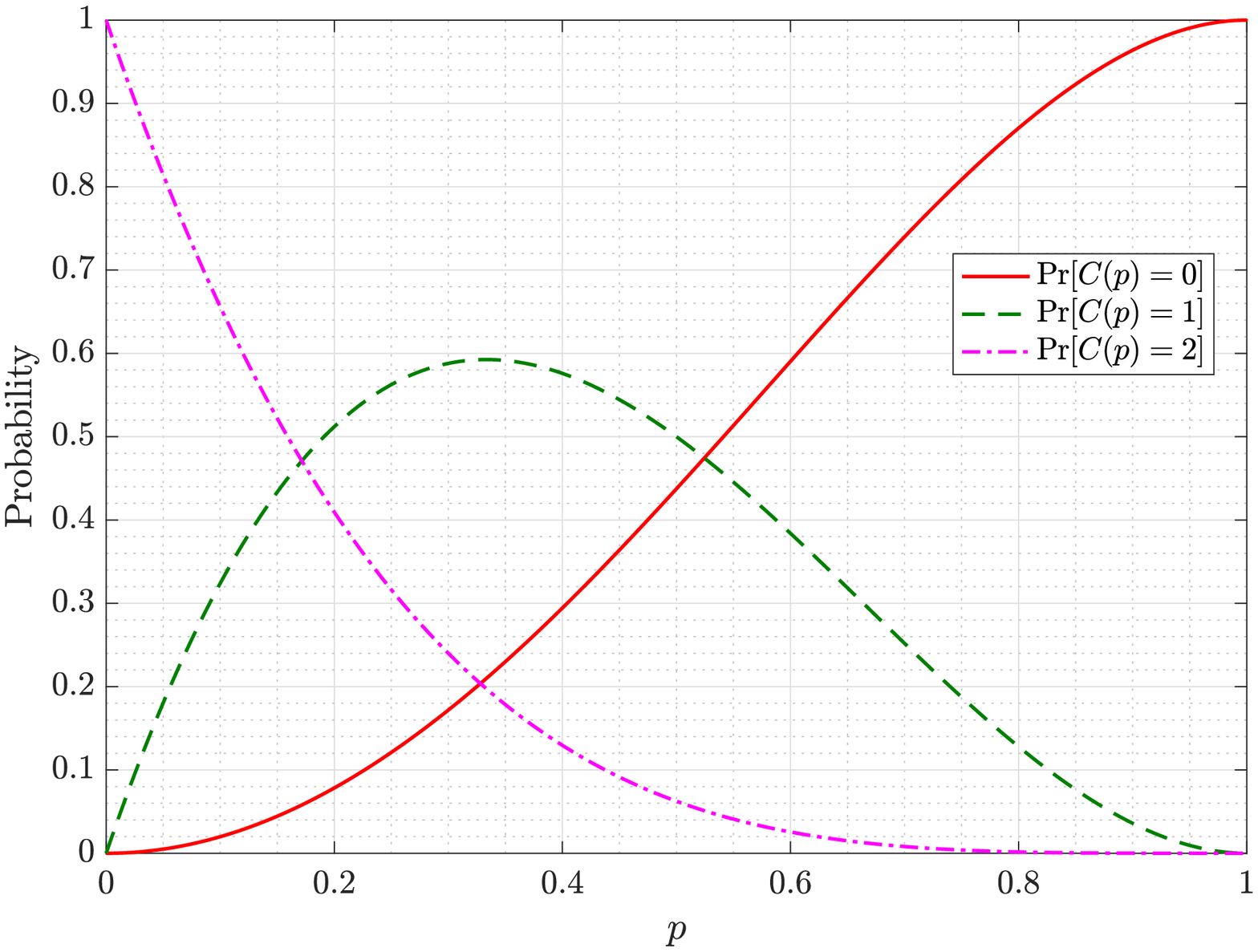}
		\caption{}
		\label{fig:capacity_4edges_a}
	\end{subfigure}
	\begin{subfigure}[b]{0.45\textwidth}
		\includegraphics[width=\textwidth]{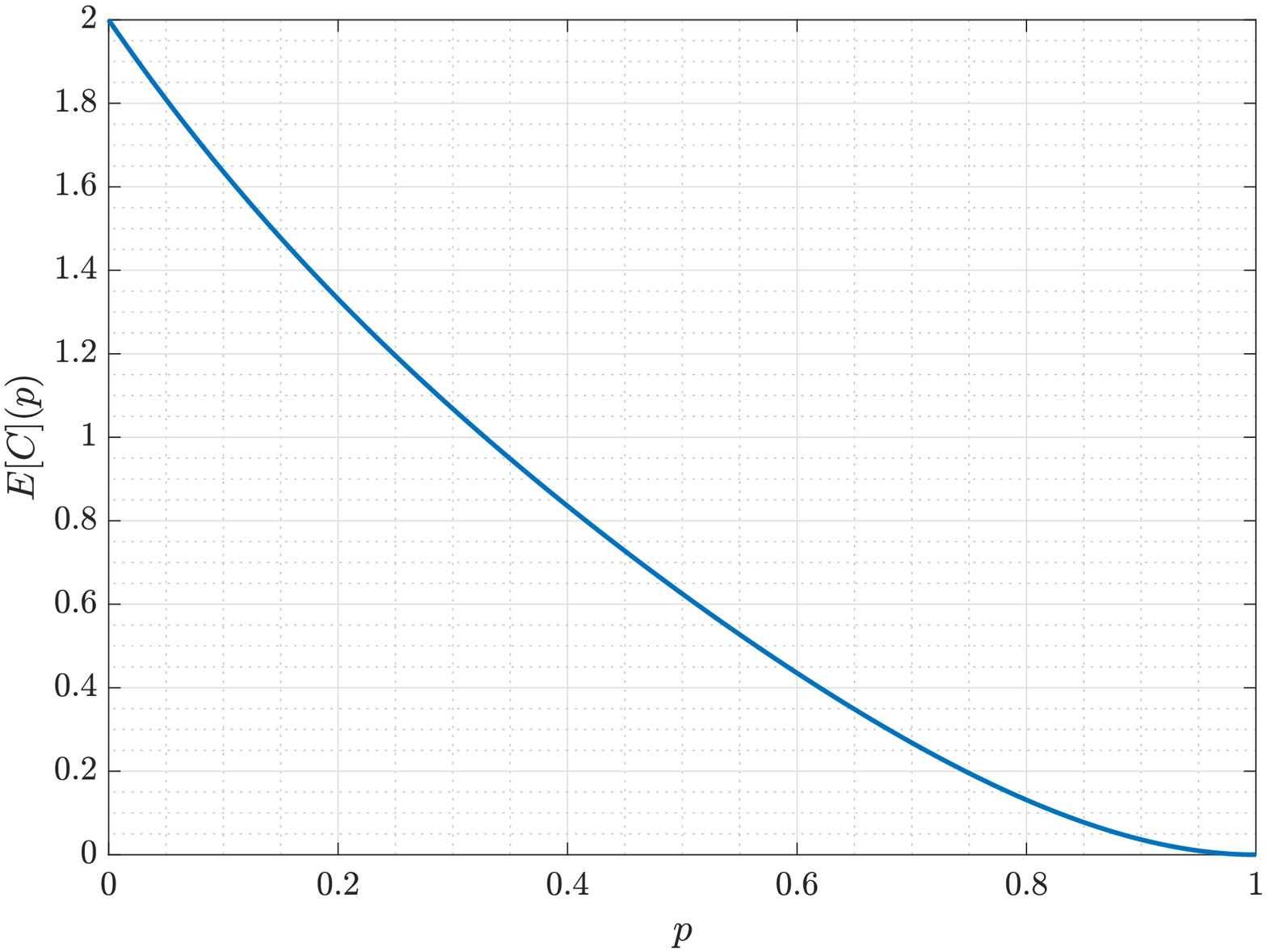}
		\caption{}
		\label{fig:capacity_4edges_b}
	\end{subfigure}	
	\caption{(a) The capacity polynomials results of the $\mathcal{N}_2$ versus varying  $p$ value are presented. (b) The ergodic capacity results of the network $\mathcal{N}_2$ .}
	\label{fig:capacity_4edges}
\end{figure}
\begin{figure}[t]
	\centering
	\begin{subfigure}[b]{0.49\textwidth}
		\includegraphics[width=\textwidth]{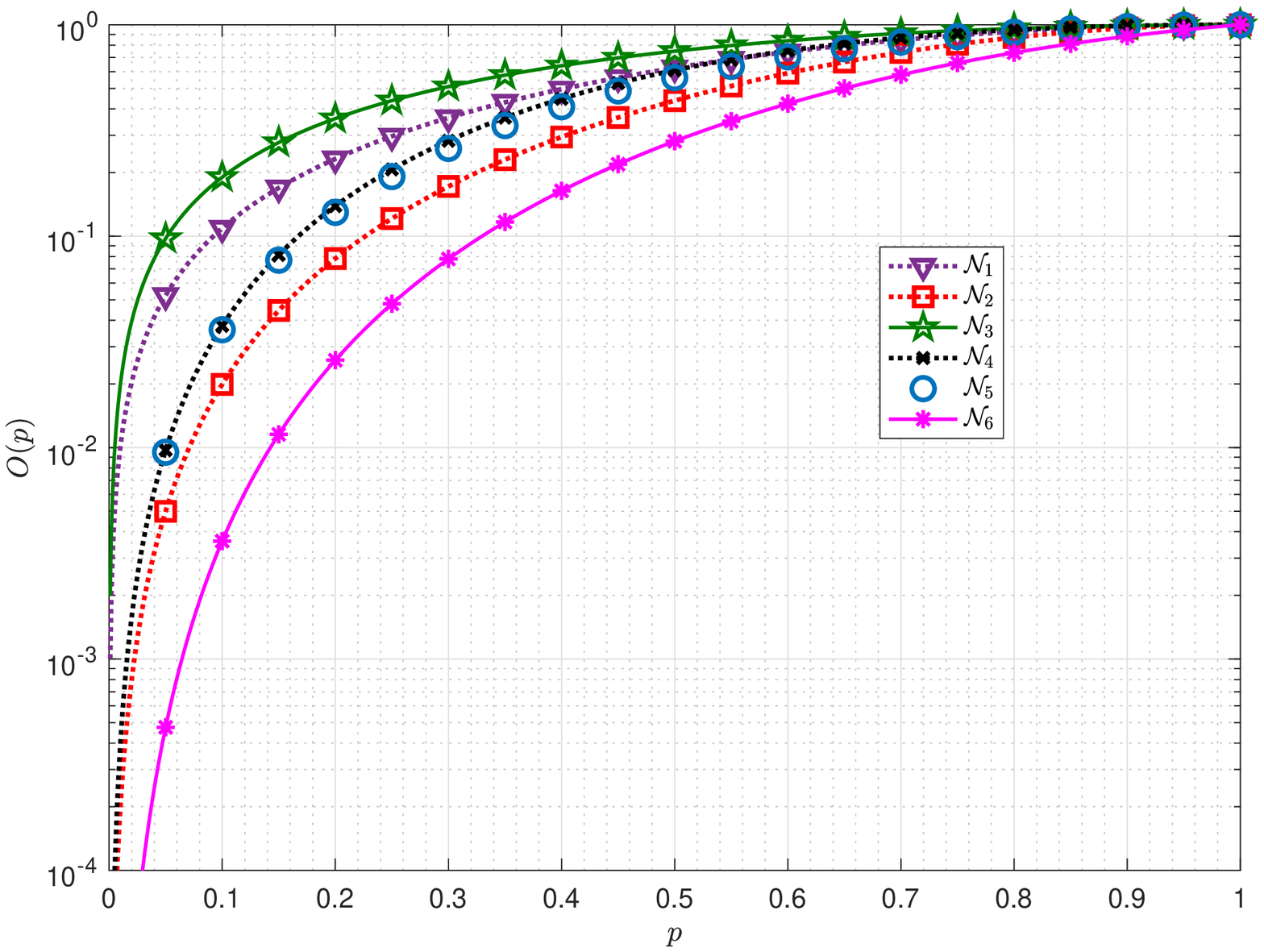}
		\caption{}
		\label{fig:outage_all_graphs}
	\end{subfigure}
	\begin{subfigure}[b]{0.49\textwidth}
		\includegraphics[width=\textwidth]{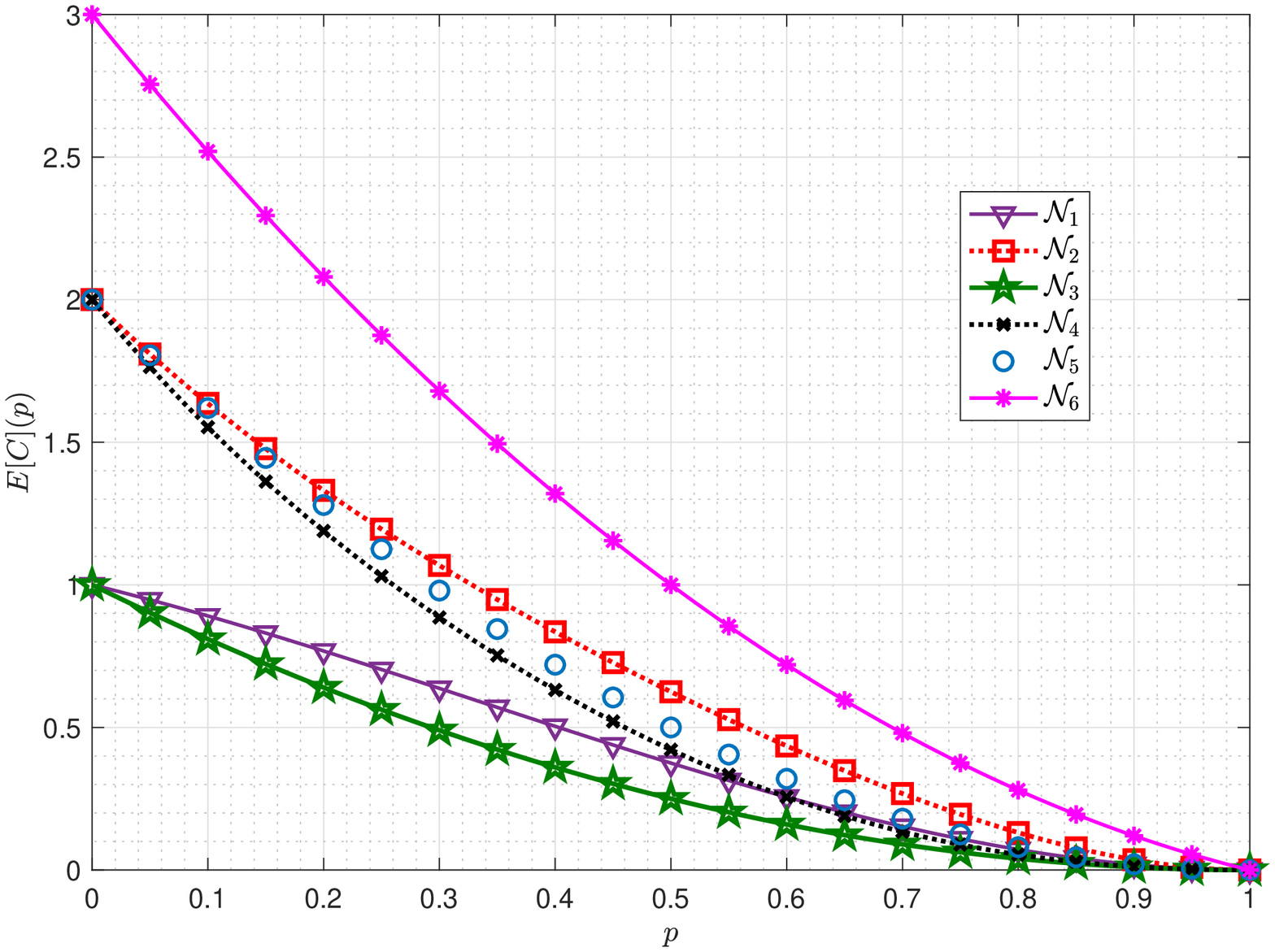}
		\caption{}
		\label{fig:capacity_all_graphs}
	\end{subfigure}	
	\caption{(a) The outage polynomial results of all the networks given in Fig. \ref{fig:all_graphs}. (b) The ergodic capacity results of all the networks given in Fig. \ref{fig:all_graphs}.}
	\label{fig:all_graphs_results}
\end{figure}

In order to obtain further insight about the derivations, the outage polynomial and the ergodic capacity results of $\mathcal{N}_4$, $\mathcal{N}_5$, and $\mathcal{N}_6$ depicted in Fig.s \ref{fig:all_graphs} (\subref{fig:N4}), (\subref{fig:N5}) and (\subref{fig:N6}), respectively, are investigated. By using \eqref{eqn16}, the outage polynomial expressions of $\mathcal{N}_4$, $\mathcal{N}_5$, and $\mathcal{N}_6$ can be respectively calculated as:
\begin{align*}
O(p)=&4p^2-2p^3-4p^4+4p^5-p^6,\\
O(p)=&4p^2-4p^3+p^4 \\
O(p)=&4p^3-4p^4+p^5, 
\end{align*}
Here, the three graphs have the same coding gain with $A_m=4$. On the other hand, $\mathcal{N}_4$ and $\mathcal{N}_5$ have the same diversity order equal to 2 and the diversity order of $\mathcal{N}_6$ is equal to 3. The ergodic capacity results of the three networks can be respectively given as:
\begin{align*}
E [C](p)=&2-5p+6p^2-8p^3+9p^4-5p^5+p^6,\\
E [C](p)=&2-4p+2p^2 \\
E [C](p)=&3-5p+2p^2 
\end{align*}
The outage polynomial and the ergodic capacity results of all the networks shown in Fig. \ref{fig:all_graphs} are presented in Fig. \ref{fig:all_graphs_results}. It can be deduced  from Fig. \ref{fig:all_graphs}(\subref{fig:outage_all_graphs}) that $\mathcal{N}_6$ has the best outage performance with the highest diversity order $m=3$. The two worst outage performance with $m=1$ belongs to $\mathcal{N}_3$ and $\mathcal{N}_1$, as expected. $\mathcal{N}_2$, $\mathcal{N}_4$, and $\mathcal{N}_5$ have close outage performance results with $m=2$. The ergodic capacity results are in accordance with the outage polynomial results. Hence, the best performance belongs to $\mathcal{N}_6$.

\section{Conclusion}

In this paper, we have obtained the performance limits of  generalized wireless
communication networks by using the concepts of graph theory. We have evaluated
the network outage polynomial by utilizing individual link outages, through the
use of path enumeration, cut-set enumeration and terminal-reliability
approaches. For high-SNR region, diversity order and coding gain have been
extracted from the graph model of wireless networks. We have proven that the
diversity order of any wireless  communication network is minimum cut-set size
of the network graph and the coding gain  is the number of distinct minimum
cut-sets. We have also presented the ergodic capacity analysis of arbitrary
networks to obtain the ergodic capacity polynomials.  The theoretical
expressions have been illustrated by numerical examples. Hence, we have
provided a comprehensive tool can be used to determine asymptotic performance
of unstructured wireless networks and to specify their performance limitations
under various implementation schemes.

\bibliographystyle{IEEEtran}
\bibliography{refs} 	
\end{document}